\documentclass{article}

\usepackage{amssymb, dsfont}
 \usepackage{amsthm}
 \newtheorem{theorem}{Theorem}[section]
\newtheorem{lemma}[theorem]{Lemma}

\theoremstyle{definition}

\usepackage{mathtools}
\usepackage{xr}
\usepackage{lineno}
\usepackage{xcolor}
\usepackage{algorithm} 
\usepackage{algpseudocode}
\usepackage{pgfplots}
\usepackage{authblk}
\usepackage{natbib}

\title{Multi Split Conformal Prediction}
\author[1]{Aldo Solari}
\author[2]{Vera Djordjilovi\'c}
\affil[1]{Department of Economics, Management and Statistics, University of Milano-Bicocca}
\affil[2]{Department of Economics, University Ca' Foscari of Venice}
\date{\today}

\begin{document}

\maketitle

\begin{abstract}
Split conformal prediction is a computationally efficient method for performing distribution-free predictive inference in regression. 
It involves, however, a one-time random split of the data, and the result can strongly depend on the particular split. 
To address this problem, we propose multi split conformal prediction, a simple method based on Markov’s inequality to aggregate split conformal prediction intervals across multiple splits.
\end{abstract}




\section{Introduction}
\label{sec:1}
Conformal prediction is a general framework for constructing prediction sets in regression problems with finite sample and distribution free marginal coverage  \citep{vovk2005algorithmic, shafer2008tutorial, gammerman2013learning}. Under very mild assumptions, conformal predictions sets provide exact coverage. These appealing theoretical properties are contrasted by very high computational cost, which hinders its practical application.

To address this issue, \cite{papadopoulos2002inductive} and \cite{lei2018distribution} have proposed inductive or \emph{split conformal prediction} which successfully  addresses the issue of computational efficiency, but at the cost of introducing extra randomness due to a one-time random split of the data. This kind of randomness of the prediction interval  parallels the  ``$p$-value lottery'' discussed in \cite{meinshausen2009p}. 


 
A straightforward strategy for alleviating this issue is  to  aggregate results from  different data splits \citep{carlsson2014aggregated}, which is the basis for cross-conformal prediction  \citep{vovk2015cross},  jackknife+  and $K$-fold CV+ prediction \citep{barber2021predictive}, as well as $K$-subsample conformal prediction of \cite{gupta2019nested}.    Table \ref{tab:aggregated} provides an overview of their finite sample coverage guarantees. 

\begin{table}[ht]\label{tab:aggregated}
        \centering
        \begin{tabular}{|lll|}
        \hline
        \emph{Method} & \emph{Coverage}  & \emph{Reference} \\
        \hline
         Cross-conformal & $\geq 1-2\alpha - a(n,K)$ & \cite{vovk2015cross} \\
         Jackknife+/CV+ & $ \geq 1-2\alpha - \min\{a(n,K), b(n,K)\}$ & \cite{barber2021predictive} \\
         Subsampling conformal & $\geq 1-\min\{2,K\}\alpha$ & \cite{gupta2019nested}\\
                 \hline
        \end{tabular}
        \caption{Aggregated conformal prediction methods with proven coverage guarantees, where $a(n,K)=(2-2/K)/(n/K +1)$ and $b(n,K)=(1-K/n)/(K+1)$.}
\end{table}
    
The coverage guarantees  listed in Table \ref{tab:aggregated} are based on the fact that double of the average $p$-value is a valid $p$-value, a result established by \cite{ruschendorf1982random} and discussed in \cite{vovk2020combining}. Only the factor $b(n,K)$ derived in Theorem 4 of \cite{barber2021predictive}  is based on a different argument that makes use of Landau's theorem for tournaments \citep{landau1953dominance}.  

In this work, we propose \emph{multi split conformal prediction}, a simple method based on Markov's inequality to aggregate split conformal prediction intervals across multiple splits. The proposed method is similar in spirit to $p$-value aggregation \citep{van2009testing, meinshausen2009p, romano2019multiple} and stability selection \citep{meinshausen2010stability, shah2013variable, ren2020derandomizing}. In particular, the multi split prediction set includes those points that are included in single split prediction intervals with frequency greater than a user defined threshold. Notably, the Bonferroni-intersection  method of  \cite{lei2018distribution} and the jackknife+/CV+ of \cite{barber2021predictive} can be seen as special cases of the proposed approach. 

In Sections \ref{sec:2} and \ref{sec:3} we revisit full and split conformal prediction, highlighting a necessary and sufficient condition for obtaining exact coverage. 
The main result regarding the aggregation of single split intervals is presented in Section \ref{sec:4}. 
An illustration of the proposed method is given in Section \ref{sec:6}.

\section{Conformal prediction}
\label{sec:2}

Assume that $Z_i=(X_i,Y_i)$, $i \in [n+1]$ are $n+1$ independent identically distributed random vectors from a probability distribution $P_{XY}$ on the sample space $\mathcal{X} \times \mathcal{Y}=\mathbb{R}^d \times \mathbb{R}$, where $[n]$ denotes $\{1,\ldots,n\}$. 
Suppose that the realizations $z_i= (x_i,y_i)$, $i\in [n]$ and $x_{n+1}$ are available, and we want to predict $Y_{n+1}$ based on $x_{n+1}$. 
We aim to construct a prediction set $C_{\alpha}(x)=C_{\alpha}(x; Z_i, i \in [n]) \subseteq \mathbb{R}$ such that its marginal coverage is at least $1-\alpha$, i.e. 
\begin{eqnarray}\label{coverage}
\mathrm{pr}(Y_{n+1} \in C_{\alpha}(X_{n+1})) \geq 1-\alpha,
\end{eqnarray}
where the probability is taken over all $Z_i$, $i \in [n+1]$.

Let $\phi_\alpha = \phi_{\alpha}(Z)\in \{0,1\}$ be a Bernoulli random variable, where
$Z = (Z_i,i \in [n+1])$. Denote by $\phi^y_\alpha = \phi_{\alpha}(Z^y)$ with $Z^y = (Z_1,\ldots,Z_n, Z^y_{n+1})$ and $Z^y_{n+1}=(X_{n+1},y)$.

\begin{theorem}\label{th:conformal}
Assume that $\phi_\alpha$ is a Bernoulli random variable such that $\mathrm{E}(\phi_\alpha)\leq \alpha$. 
Then the prediction set 
\begin{eqnarray*}\label{C}
C_{\alpha}(x) = \{y \in \mathbb{R}: \phi^y_\alpha = 0\},
\end{eqnarray*}
satisfies (\ref{coverage}). Exact coverage $\mathrm{pr}(Y_{n+1} \in C_{\alpha}(X_{n+1})) = 1-\alpha$ is obtained if and only if $\mathrm{E}(\phi_\alpha) = \alpha$.
\end{theorem}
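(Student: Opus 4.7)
The plan is to unwind the definitions and observe that the event of interest is essentially a tautology about $\phi_\alpha$. Specifically, by construction $Z^y = (Z_1, \ldots, Z_n, (X_{n+1}, y))$, so substituting $y = Y_{n+1}$ gives $Z^{Y_{n+1}} = (Z_1, \ldots, Z_n, (X_{n+1}, Y_{n+1})) = Z$, and therefore $\phi_\alpha^{Y_{n+1}} = \phi_\alpha(Z^{Y_{n+1}}) = \phi_\alpha(Z) = \phi_\alpha$. This identity is the whole engine of the proof.

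Given this, the equivalence
\[
\{Y_{n+1} \in C_\alpha(X_{n+1})\} \;=\; \{\phi_\alpha^{Y_{n+1}} = 0\} \;=\; \{\phi_\alpha = 0\}
\]
follows immediately from the definition of $C_\alpha$. Since $\phi_\alpha$ is Bernoulli, $\mathrm{pr}(\phi_\alpha = 0) = 1 - \mathrm{E}(\phi_\alpha)$, and the assumption $\mathrm{E}(\phi_\alpha) \leq \alpha$ yields $\mathrm{pr}(Y_{n+1} \in C_\alpha(X_{n+1})) \geq 1 - \alpha$, which is (\ref{coverage}). The iff clause is read off the same computation: $\mathrm{pr}(Y_{n+1} \in C_\alpha(X_{n+1})) = 1 - \alpha$ precisely when $\mathrm{E}(\phi_\alpha) = \alpha$.

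There is no real obstacle here: the only point requiring any care is verifying that the notation $\phi_\alpha^y$ when evaluated at the random variable $y = Y_{n+1}$ really does produce $\phi_\alpha(Z)$, i.e.\ that the act of ``plugging in the true response'' reconstructs the i.i.d.\ sample $Z$ rather than some modified version. Once that bookkeeping is in place, the rest is a one-line application of $\mathrm{E}(\phi_\alpha) = \mathrm{pr}(\phi_\alpha = 1)$.
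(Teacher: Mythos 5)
Your proof is correct and follows the same route as the paper's one-line argument, which simply notes that $\mathrm{pr}(Y_{n+1} \notin C_{\alpha}(X_{n+1})) = \mathrm{E}(\phi_\alpha) \leq \alpha$; you merely spell out the bookkeeping that $\phi_\alpha^{Y_{n+1}} = \phi_\alpha(Z) = \phi_\alpha$, which is exactly the identity the paper's proof takes for granted.
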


Informally, $\phi_{\alpha}^y$ can be thought of as a test for the null hypothesis that $Y_{n+1}$ assumes the value of $y$, that is $H_y: Y_{n+1}=y$. Theorem \ref{th:conformal} states that a valid prediction set can be obtained by inverting a collection of such tests. The proof of Theorem \ref{th:conformal} and of all the following results are provided in the Supplementary material.

\section{Split conformal prediction}
\label{sec:3}

Consider a partition of $[n]$ into a calibration set $L$ of size $w$ and a validation set $I$ of size $m = n-w$, independently of the observed data values.  Define a statistic $R= R(Z_L, Z_{n+1})$, referred to as conformity score in conformal inference, to serve as a measure of plausibility of the value $y$ as a realization of $Y_{n+1}$ for the observed value of $X_{n+1}$. Examples include 
\begin{equation}
R = |Y_{n+1}-\hat{\mu}_L(X_{n+1})|,
\label{r1}
\end{equation}
where $\hat{\mu}_L$ is an estimator of $\mathbb{E}(Y\mid X)$ based on $(Z_{l})_{l\in L}$ \citep{papadopoulos2002inductive, lei2018distribution} and
\begin{equation}
\label{r2}
R = \max\left\{\hat{q}_L^\gamma (X_{n+1})- Y_{n+1}, Y_{n+1} - \hat{q}_L^{1-\gamma} (X_{n+1})\right\},
\end{equation}
where $\hat{q}_L^\gamma$ is an estimator of the $\gamma$-quantile of $Y\mid X$ \citep{romano2019conformalized,sesia2020comparison}.
Denote the validation set by $I=\left\{j_1, \ldots, j_{m}\right\}$ and let
\begin{equation}
    R_i = R((Z_l)_{l\in L}, Z_{j_i}), \quad i\in [m].
\end{equation}
For $\alpha \in (0,1)$, define a quantile $R_{\alpha}= R_{\lceil (1-\alpha)(m+1)\rceil}$, where $R_{1}\leq\ldots\leq R_{m}$ are ordered statistics obtained by sorting $R_1, \ldots, R_m$ in  non-decreasing order with ties  broken arbitrarily.

\begin{lemma}\label{le:phi}
The Bernoulli variable
$\phi_\alpha = \mathds{1}\{R > R_{\alpha}\}$ satisfies $\mathrm{E}(\phi_\alpha)\leq \alpha$. If $R_1,\ldots,R_m,R$ are almost surely distinct,
then $\mathrm{E}(\phi_\alpha) = \alpha$  if and only if $\alpha \in  \{1/(m+1),2/(m+1),\ldots,m/(m+1)\}$.
\end{lemma}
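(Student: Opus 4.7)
My plan is to reduce the computation of $\mathrm{E}(\phi_\alpha) = \mathrm{pr}(R > R_{\lceil(1-\alpha)(m+1)\rceil})$ to a statement about the rank of $R$ among $R_1,\ldots,R_m,R$, exploiting exchangeability. Conditional on $Z_L$, the observations $Z_{j_1},\ldots,Z_{j_m},Z_{n+1}$ are i.i.d.\ and the score function $R(Z_L,\cdot)$ is deterministic, so the vector $(R_1,\ldots,R_m,R)$ is exchangeable. Hence the rank of $R$ within this combined collection of $m+1$ scores is uniformly distributed on $\{1,\ldots,m+1\}$ \emph{when the scores are almost surely distinct}; when ties can occur, the ``strict rank'' (number of scores strictly less than $R$, plus one) is still stochastically no larger than a uniform on $\{1,\ldots,m+1\}$.

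With $k=\lceil(1-\alpha)(m+1)\rceil$, the event $\{R>R_{\alpha}\}=\{R>R_{(k)}\}$ means that $R$ strictly exceeds at least $k$ of the calibration scores, i.e.\ the strict rank of $R$ is at least $k+1$. Under the distinctness assumption this probability is exactly $(m+1-k)/(m+1)$, and in general it is at most $(m+1-k)/(m+1)$. Since $k\geq(1-\alpha)(m+1)$, we obtain
\begin{equation*}
\mathrm{E}(\phi_\alpha)\;\leq\;\frac{m+1-k}{m+1}\;=\;1-\frac{\lceil(1-\alpha)(m+1)\rceil}{m+1}\;\leq\;1-(1-\alpha)\;=\;\alpha,
\end{equation*}
which proves the first part.

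For the characterization of equality under the distinctness assumption, the first inequality above becomes an equality by uniformity of the rank, so equality in $\mathrm{E}(\phi_\alpha)=\alpha$ is equivalent to $\lceil(1-\alpha)(m+1)\rceil=(1-\alpha)(m+1)$, i.e.\ to $(1-\alpha)(m+1)$ being an integer. Since $\alpha\in(0,1)$, this forces $(1-\alpha)(m+1)\in\{1,\ldots,m\}$, equivalently $\alpha\in\{1/(m+1),2/(m+1),\ldots,m/(m+1)\}$, which is the claimed set.

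The only real subtlety I anticipate is the careful handling of exchangeability: one should state explicitly that conditioning on $Z_L$ makes the map $z\mapsto R(Z_L,z)$ deterministic, so the i.i.d.\ property of $Z_{j_1},\ldots,Z_{j_m},Z_{n+1}$ transfers to exchangeability of $(R_1,\ldots,R_m,R)$, and then unconditioning preserves the rank distribution. The tie-breaking convention in the definition of order statistics is harmless for the inequality since the event $\{R>R_{(k)}\}$ is defined by strict inequality; distinctness is only needed to upgrade the inequality to an equality.
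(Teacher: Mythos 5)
Your proof is correct, but it takes a different route from the paper. You give the classical self-contained conformal argument: conditionally on $Z_L$ the scores $(R_1,\ldots,R_m,R)$ are i.i.d., hence exchangeable, so the (strict) rank of $R$ among the $m+1$ scores is uniform in the distinct case and stochastically dominated by a uniform in general, and the bound and the exactness characterization follow by direct counting, since $\mathrm{pr}(R>R_{(k)})=(m+1-k)/(m+1)$ with $k=\lceil(1-\alpha)(m+1)\rceil$. The paper instead embeds split conformal prediction in the permutation-testing framework: it constructs the group $\Sigma$ of swaps of the index $n+1$ with a validation index, verifies the group-invariance condition from $Z\stackrel{d}{=}\sigma Z$, replaces $R_\alpha$ by the order statistic $\tilde{R}_\alpha$ of the combined sample, and then invokes Theorem 1 of Hemerik and Goeman (2018) for $\mathrm{E}(\phi_\alpha)\leq\alpha$ and their Condition 1 and Proposition 1 for the ``if and only if'' part. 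Your approach buys a fully elementary, self-contained proof; the paper's buys a connection to general permutation-test theory, from which the exactness characterization comes for free and which makes clear that only invariance under the swap group is needed. Two small points you should tighten: (i) the claim that with ties the strict rank is stochastically dominated by a uniform deserves its one-line justification (by exchangeability each index has the same probability of strictly exceeding at least $k$ of the others, and at most $m+1-k$ indices can do so, so that probability is at most $(m+1-k)/(m+1)$); (ii) when $\alpha<1/(m+1)$ one has $\lceil(1-\alpha)(m+1)\rceil=m+1>m$, so $R_\alpha$ must be interpreted as $+\infty$ (equivalently $\phi_\alpha\equiv 0$), in which case the bound is trivial and, consistently with your characterization, equality fails; your identification of $\{R>R_{(k)}\}$ with ``$R$ strictly exceeds at least $k$ calibration scores'' only applies for $k\leq m$.
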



Algorithm \ref{alg:SC} describes how to compute the split conformal prediction set. 

\begin{algorithm}[ht]
\caption{Split Conformal}\label{alg:SC}
 	\begin{algorithmic}[1]
 	\Require{data $(x_1,y_1),\ldots,(x_{n},y_{n})$, $x_{n+1}$, validation sample size $m$,  statistic $R$, level $\alpha \in (0,1)$}
     \State split $[n]$ into  $L$ of size $w$ and  $I$ of size $m=n-w$
     \State compute $\{R_{i}\}_{i=1}^{m}$ and $R_{\alpha} = R_{(\lceil (1-\alpha)(m+1)\rceil)}$ 
 	\end{algorithmic} 
 	\Return  split conformal prediction set $C_{\alpha}(x_{n+1}) = \{y \in \mathbb{R}: R \leq R_{\alpha}\}$
\end{algorithm}
 In particular, for $R$ defined as in \eqref{r1} and \eqref{r2}, 
Algorithm \ref{alg:SC}  returns  $C_{\alpha}(x_{n+1}) = [\hat{\mu}_{L}(x_{n+1}) - R_{\alpha},\hat{\mu}_{L}(x_{n+1}) + R_{\alpha}]$ and $C_{\alpha}(x_{n+1}) = [\hat{q}_L^{\gamma}(x_{n+1}) - R_\alpha, \hat{q}_{L}^{1-\gamma}(x_{n+1})+R_{\alpha}]$, respectively. The former is always an interval, whereas the latter is either an  interval or an empty set, i.e. $C_{\alpha}(x_{n+1}) = \emptyset$ if and only if $R_{\alpha} < (1/2)[\hat{q}_{L}^{\gamma}(x_{n+1}) - \hat{q}_{L}^{1-\gamma}(x_{n+1})]$ \citep{gupta2019nested}.

\section{Multi split conformal prediction}
\label{sec:4}

The multi split approach consists in constructing single split prediction sets multiple times, and proceeds with aggregating the results by including those points that are included in single split prediction intervals with frequency greater than a threshold. 

We proceed as follows: we first choose the number of splits $B \in \mathbb{N}$. We then partition $[n]$ into $L^{[b]}$ of size $w^{[b]}$ and $I^{[b]}$ of size $m^{[b]} = n-w^{[b]}$, independently of the observed data values, and choose a statistic $R^{[b]}$, for $b=1,\ldots,B$. For $\beta \in (0,1)$, the Bernoulli random variable $\phi_{\beta}^{[b]} = \mathds{1}\{ R^{[b]} > R^{[b]}_{\beta} \}$ has expected value $\mathrm{E}(\phi_{\beta}^{[b]}) \leq \beta$ by Lemma \ref{le:phi}. Let 
\begin{eqnarray}\label{V}
V_{\beta} =\sum_{b=1}^{B} \phi_{\beta}^{[b]}
\end{eqnarray}
be the number of successes (1s). 
The following Theorem provides an upper bound for $\mathrm{pr}( V_{\beta} \geq k)$, the probability of at least $k$ successes out of $B$ trials.

\begin{theorem}\label{th:markov}
Let $\lambda$ be a non-negative integer such that, for a given integer $1\leq k \leq B$ and $\beta \in (0,1)$, the following holds: 
 \begin{eqnarray}\label{as:lambda}
\sum_{u=0}^{k-1}\mathrm{pr}(V_{\beta} \in [k-u,k) ) \geq \sum_{u=0}^{\lambda}\mathrm{pr}(V_{\beta} \in [k,k+u) ).
\end{eqnarray}
Then
\begin{eqnarray}\label{lambdaBound}
\mathrm{pr}(V_{\beta} \geq k) \leq \frac{B\beta}{k + \lambda}.
\end{eqnarray}
\end{theorem}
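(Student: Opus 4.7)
The starting observation is that by Lemma~\ref{le:phi} and linearity of expectation,
\begin{eqnarray*}
\mathrm{E}(V_{\beta}) \;=\; \sum_{b=1}^{B} \mathrm{E}(\phi_{\beta}^{[b]}) \;\leq\; B\beta,
\end{eqnarray*}
so the crude Markov bound already gives $\mathrm{pr}(V_{\beta}\geq k) \leq B\beta/k$. The plan is to sharpen this to $B\beta/(k+\lambda)$ by showing that the probability mass of $V_\beta$ located below $k$ can be traded, via the hypothesis~(\ref{as:lambda}), against mass located in the window $[k,k+\lambda)$, effectively inflating the Markov denominator from $k$ to $k+\lambda$.

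The first step is to rewrite the assumption (\ref{as:lambda}) in a more useful form by swapping the order of summation. Writing $p_j = \mathrm{pr}(V_\beta = j)$, the double sum on the left collects $p_j$ for $j\in\{0,\ldots,k-1\}$, with the index $j$ appearing in exactly $j$ of the inner ranges, and the double sum on the right collects $p_j$ for $j\in\{k,\ldots,k+\lambda-1\}$ with weight $k+\lambda-j$. Thus~(\ref{as:lambda}) is equivalent to
\begin{eqnarray*}
\sum_{j=0}^{k-1} j\,p_j \;\geq\; \sum_{j=k}^{k+\lambda-1} (k+\lambda-j)\,p_j.
\end{eqnarray*}

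The second step is to decompose $\mathrm{E}(V_\beta)=\sum_{j=0}^{B} j\,p_j$ into three ranges $j<k$, $k\leq j<k+\lambda$, and $j\geq k+\lambda$, replace $\sum_{j<k} j\,p_j$ from below using the inequality just derived, and then combine terms:
\begin{eqnarray*}
B\beta \;\geq\; \mathrm{E}(V_\beta) \;\geq\; \sum_{j=k}^{k+\lambda-1}\bigl[(k+\lambda-j)+j\bigr] p_j + \sum_{j=k+\lambda}^{B} j\,p_j \;\geq\; (k+\lambda)\sum_{j=k}^{B} p_j,
\end{eqnarray*}
where the last inequality uses $j\geq k+\lambda$ in the tail. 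Dividing by $k+\lambda$ yields (\ref{lambdaBound}).

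The only non-routine step is the combinatorial re-expression of (\ref{as:lambda}), since the rest is a straightforward splitting of Markov's inequality; I would double-check that reformulation carefully by verifying the coefficient counts at the endpoints $j=0$, $j=k-1$, $j=k$, and $j=k+\lambda-1$, because an off-by-one there would change the final denominator.
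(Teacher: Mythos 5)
Your proof is correct, and it takes a genuinely different route from the paper. The paper follows \cite{ren2020derandomizing} and \cite{huber2019halving}: it introduces an auxiliary smoothing variable $U \sim \mathrm{Unif}(-\lambda,k)$ independent of $V_\beta$ and proves the two inequalities $\mathrm{pr}(V_\beta \geq k) \leq \mathrm{pr}(V_\beta + U \geq k) \leq \mathrm{E}(V_\beta)/(k+\lambda)$, the first by comparing integrals of tail probabilities (this is where (\ref{as:lambda}) enters), the second by a conditional Markov-type bound on the smoothed variable. You instead dispense with the randomization entirely: your swap of the order of summation is correct at all endpoints (the left side of (\ref{as:lambda}) equals $\sum_{j=0}^{k-1} j\,p_j$ and the right side equals $\sum_{j=k}^{k+\lambda-1}(k+\lambda-j)\,p_j$, which is consistent with the paper's own special cases, e.g.\ $k=B$ giving $p_1+2p_2+\cdots+(B-1)p_{B-1}\geq \lambda p_B$), and the three-range split of $\mathrm{E}(V_\beta)$ then yields $(k+\lambda)\,\mathrm{pr}(V_\beta\geq k)\leq \mathrm{E}(V_\beta)\leq B\beta$, the last step by Lemma \ref{le:phi}, exactly as in the paper. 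What the paper's smoothing argument buys is a conceptual interpretation of $\lambda$ as a smoothing parameter and a direct link to the cited literature; what your argument buys is a fully elementary, self-contained derivation that uses assumption (\ref{as:lambda}) exactly as stated, with no integrals or auxiliary random variables, and it makes transparent how the hypothesis trades mass below $k$ against mass in $[k,k+\lambda)$ to inflate the Markov denominator.
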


\noindent The  parameter $\lambda$ can  be  regarded  as  a  smoothing  parameter. 
The value $\lambda = 0$ reduces (\ref{lambdaBound}) to Markov's bound, while positive values of $\lambda$ correspond to tighter bounds achievable under constraints on the shape of the
distribution of $V_\beta$ \citep{shah2013variable, huber2019halving, ren2020derandomizing}. For $k=1$ and $\lambda = B-1$, assumption (\ref{as:lambda}) holds if and only if  $\phi^{[1]}_\beta=\ldots=\phi^{[B]}_\beta$, and for $k=B$, it holds if
$p_1 + 2 p_2 + \ldots +  (B-1) p_{B-1}
\geq  \lambda p_B$, where $p_k$ denote $\mathrm{P}(V_\beta = k)$. For an odd number $B$ with $k=(B+1)/2$ and $\lambda = k-1$, assumption (\ref{as:lambda}) requires that the probability mass function of $V_{\beta}$ on $\{1,\ldots,B-1\}$ is not skewed to the right in the sense that
 \begin{eqnarray}\label{skewk}
p_1 + 2p_2 + \ldots + \frac{B-1}{2} p_{\frac{B-1}{2}} \geq  \frac{B-1}{2} p_{\frac{B+1}{2}} + \ldots + 2 p_{B-2} + p_{B-1}.
\end{eqnarray}
In general, however, we do not have a guarantee that a positive value of $\lambda$ satisfying \eqref{as:lambda} exists. 

Theorem \ref{th:markov} can be used to aggregate results of split conformal inference performed over a number of different data splits. Let
\begin{eqnarray*}\label{Pi}
\Pi_{\beta}  = 1 - \frac{V_{\beta}}{B} =  \frac{1}{B} \sum_{b=1}^{B} \mathds{1}\{Y_{n+1} \in C^{[b]}_{\beta}(X_{n+1})\}
\end{eqnarray*}
be the proportion of prediction sets $C_{\beta}^{[b]}(X_{n+1})$ that include $Y_{n+1}$. For $\alpha \in (0,1)$ and a threshold $\tau = 1 - k/B$, the multi split conformal prediction set defined as
\begin{eqnarray}\label{MSC}
C^\tau_{\alpha}(x_{n+1})=\{y \in \mathbb{R}: \Pi_{\beta}^y  > \tau \}
\end{eqnarray}
has coverage at least $1-\alpha$ by Theorem \ref{th:conformal} with $\phi_{\alpha} = \mathds{1}\{V_\beta \geq k\} = \mathds{1}\{\Pi_\beta \leq \tau\}$, where $\beta = \alpha(1-\tau)$ with no assumptions or $\beta = \alpha(1-\tau + \lambda/B)$ under the assumption (\ref{as:lambda}) of Theorem \ref{th:markov} guarantees $\mathrm{E}(\phi_\alpha) \leq \alpha$. 

Algorithm \ref{alg:MSC} describes how to compute the multi split conformal prediction set. 

\begin{algorithm}[ht]
\caption{Multi Split Conformal}\label{alg:MSC}
 	\begin{algorithmic}[1]
 	\Require{data $(x_1,y_1),\ldots,(x_{n},y_{n})$, $x_{n+1}$, number of splits $B \in \mathbb{N}$, calibration sample sizes  $(m^{[b]})_{b=1}^{B}$, statistics $(R^{[b]})_{b=1}^{B}$, threshold $\tau \in [0,(B-1)/B]$, level $\alpha \in (0,1)$, smoothing parameter $\lambda\in \mathbb{N}_0$.}
 	\For{$b \gets 1$ to $B$ }  
 \State compute  $C^{[b]}_{\beta}(x_{n+1})$ using Algorithm \ref{alg:SC} with $m^{[b]}$, $R^{[b]}$ and level $\beta =  \alpha(1-\tau + \lambda/B)$ 
      	\EndFor
 	\end{algorithmic} 
 	\Return multi split conformal prediction set $C^\tau_{\alpha}(x_{n+1})=\{y \in \mathbb{R}: \Pi_{\beta}^y  > \tau \}$
\end{algorithm}

In general, $C^\tau_\alpha$ is not guaranteed to be an interval, even when single split prediction sets $C^{[b]}_\beta$ are all intervals. To compute $C^\tau_\alpha$  efficiently,  one can use Algorithm 1 in \cite{gupta2019nested}.
If the computation of each single split interval $C^{[b]}_\beta$ takes time $\leq T$, the overall time to compute $C^\tau_\alpha$ is $O(B \log B) + BT$.

The parameter $\tau$ can be regarded as a tuning parameter, and proper choice of $\tau$ is essential for good performance. Consider the case without assumption, i.e. $\lambda = 0$. On the one hand, setting $\tau=1-1/B$ gives the Bonferroni-intersection method of \cite{lei2018distribution} with $C^{1-1/B}_{\alpha} = \bigcap_{b} C^{[b]}_{\alpha/B}$.
On the other hand, setting $\tau=0$ gives an unadjusted-union $ C^0_\alpha = \bigcup_{b} C^{[b]}_{\alpha}$.

An intermediate choice $\tau=1/2$ amounts to constructing $B$ single split confidence intervals  at level $\alpha/2$, that is $C^{[b]}_{\alpha/2}$, which is a small but not negligible price to pay for using multiple splits rather than just one split. However, if the assumption of Theorem (\ref{th:markov}) holds for $\lambda>0$, we can use a higher level $\alpha(1/2 +  \lambda/B)$, that is  $C^{[b]}_{\alpha(1/2 +  \lambda/B)}$. 

Notice that the multi split prediction set $C^\tau_\alpha$ is very flexible because it allows to use splits of different proportions $m/n$ and possibly different statistics $S$ across splits. This flexibility can be especially useful in conformal quantile regression since it allows to consider different values for the quantile $\gamma$ in \eqref{r2}. The value of $\gamma$ and the proportion $m/n$ seem to be critical for the performance of conformal quantile regression, as highlighted by \cite{romano2019conformalized} and \cite{sesia2020comparison}.

\section{Example}
\label{sec:6}

We apply multi split conformal prediction on the Communities and Crime data set 
analyzed in \cite{romano2019conformalized, sesia2020comparison} and \cite{barber2021predictive}. The data set contains information on 1994 communities, with information regarding median family income, family size, per capita number of police officers, etc. The goal is to predict a response variable defined as the per capita violent crime rate. After removing categorical variables and variables
with missing data, $d = 99$ covariates remain.

\begin{figure}[ht]
    \centering
\includegraphics[scale=.8]{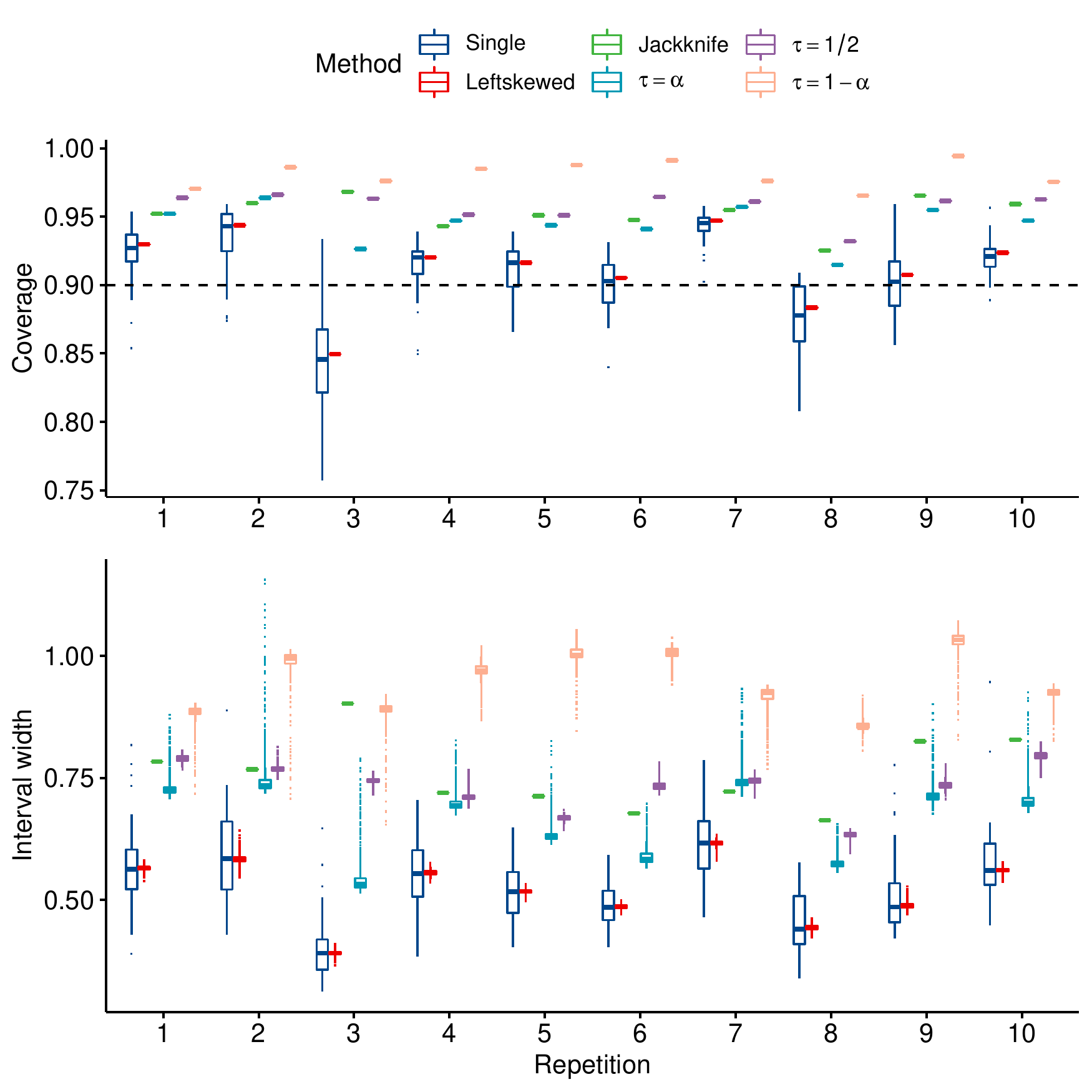}
    \caption{
    Coverage and interval width for single split and multi split prediction sets on the Communities and Crime data set.
     Training and test sets of sizes $n=200$ and $1794$. The coverage level is $1-\alpha=90\%$.
    The statistic $R$ is defined as in (\ref{r1}) where $\hat{\mu}_L$ is estimated by ridge regression. Multi split is performed with $\tau = \alpha, 1/2, 1-\alpha$ and $(\tau,\lambda) = ((B-1)/2B,(B-1)/2))$ (Leftskewed) by assuming (\ref{skewk}). The number of random splits is 
    $B=51$, and the size of the inference set is $m=99$.
     The experiment is repeated 10 times by randomly splitting the train/test each time.}
    \label{fig:Crime}
\end{figure}

We replicate the experiment in \cite{barber2021predictive}. 
We randomly sample $n = 200$ data points from the full data set, to use as training data. The remaining 1794 points form the test set. We use $R$ defined as in \eqref{r1} where $\hat{\mu}_L$ is estimated by the ridge regression algorithm with penalty parameter chosen as $0.001 c^2$, where $c$ is the largest singular value of the training data matrix. 
We set the coverage level to $1-\alpha=90\%$, the number of random split to $B=51$ and size for the inference set to $m=99$. We construct $B$ single split intervals and the multi split interval by using $\tau = \alpha$, $\tau = 1/2$, $\tau=1-\alpha$ with no assumptions and $(\tau, \lambda) = ((B-1)/2B, (B-1)/2)$ by assuming (\ref{skewk}), which we refer to as ``Leftskewed''. The jackknife+ interval is added for comparison with $\alpha=5\%$ in order to guarantee coverage at $1-2\alpha=90\%$. For each method, we calculate its empirical coverage and interval width on the test set. 
We then repeat this procedure
10 times, with the train/test split formed randomly each time.

Figure \ref{fig:Crime} displays the results.
Intervals obtained by the ``Leftskewed'' method exhibit coverage and width comparable to single split intervals, but with substantially reduced variability, as expected. Assumption free methods reflect the conservativeness of Markov's inequality and can not compete with the exact coverage single split method.

 \section{Discussion}
 
 We have proposed a simple method for aggregating single split conformal prediction intervals. In general, the proposed method is conservative and can not compete with the exact single split method. However, it reduces the randomness of a single data split and provides flexibility in combining different statistical learning algorithms across different splits. In addition, we have shown that the conservativeness of the method can be attenuated under an additional  mild assumption that sharpens Markov's tail inequality.  Investigating this and similar assumptions necessitates future research.



\bibliographystyle{elsarticle-harv} 
\bibliography{MSCP_shortjournalname.bib}

\begin{thebibliography}{28}
\expandafter\ifx\csname natexlab\endcsname\relax\def\natexlab#1{#1}\fi
\providecommand{\url}[1]{\texttt{#1}}
\providecommand{\href}[2]{#2}
\providecommand{\path}[1]{#1}
\providecommand{\DOIprefix}{doi:}
\providecommand{\ArXivprefix}{arXiv:}
\providecommand{\URLprefix}{URL: }
\providecommand{\Pubmedprefix}{pmid:}
\providecommand{\doi}[1]{\href{http://dx.doi.org/#1}{\path{#1}}}
\providecommand{\Pubmed}[1]{\href{pmid:#1}{\path{#1}}}
\providecommand{\bibinfo}[2]{#2}
\ifx\xfnm\relax \def\xfnm[#1]{\unskip,\space#1}\fi
\bibitem[{Balasubramanian et~al.(2014)Balasubramanian, Ho and
  Vovk}]{balasubramanian2014conformal}
\bibinfo{author}{Balasubramanian, V.}, \bibinfo{author}{Ho, S.S.},
  \bibinfo{author}{Vovk, V.}, \bibinfo{year}{2014}.
\newblock \bibinfo{title}{Conformal prediction for reliable machine learning:
  theory, adaptations and applications}.
\newblock \bibinfo{publisher}{Newnes}.
\bibitem[{Barber et~al.(2021)Barber, Candes, Ramdas and
  Tibshirani}]{barber2021predictive}
\bibinfo{author}{Barber, R.F.}, \bibinfo{author}{Candes, E.J.},
  \bibinfo{author}{Ramdas, A.}, \bibinfo{author}{Tibshirani, R.J.},
  \bibinfo{year}{2021}.
\newblock \bibinfo{title}{Predictive inference with the jackknife+}.
\newblock \bibinfo{journal}{Ann Stat} \bibinfo{volume}{49},
  \bibinfo{pages}{486--507}.
\bibitem[{Carlsson et~al.(2014)Carlsson, Eklund and
  Norinder}]{carlsson2014aggregated}
\bibinfo{author}{Carlsson, L.}, \bibinfo{author}{Eklund, M.},
  \bibinfo{author}{Norinder, U.}, \bibinfo{year}{2014}.
\newblock \bibinfo{title}{Aggregated conformal prediction}, in:
  \bibinfo{booktitle}{IFIP International Conference on AIAI},
  \bibinfo{organization}{Springer}. pp. \bibinfo{pages}{231--240}.
\bibitem[{Commenges(2003)}]{commenges2003transformations}
\bibinfo{author}{Commenges, D.}, \bibinfo{year}{2003}.
\newblock \bibinfo{title}{Transformations which preserve exchangeability and
  application to permutation tests}.
\newblock \bibinfo{journal}{J Nonparametr Stat} \bibinfo{volume}{15},
  \bibinfo{pages}{171--185}.
\bibitem[{DiCiccio et~al.(2020)DiCiccio, DiCiccio and
  Romano}]{romano2019multiple}
\bibinfo{author}{DiCiccio, C.J.}, \bibinfo{author}{DiCiccio, T.J.},
  \bibinfo{author}{Romano, J.P.}, \bibinfo{year}{2020}.
\newblock \bibinfo{title}{Exact tests via multiple data splitting}.
\newblock \bibinfo{journal}{Stat Probab Lett} \bibinfo{volume}{166},
  \bibinfo{pages}{108865}.
\bibitem[{Gammerman et~al.(2013)Gammerman, Vovk and
  Vapnik}]{gammerman2013learning}
\bibinfo{author}{Gammerman, A.}, \bibinfo{author}{Vovk, V.},
  \bibinfo{author}{Vapnik, V.}, \bibinfo{year}{2013}.
\newblock \bibinfo{title}{Learning by transduction}.
\newblock \bibinfo{journal}{preprint arXiv:1301.7375} .
\bibitem[{Gupta et~al.(2019)Gupta, Kuchibhotla and Ramdas}]{gupta2019nested}
\bibinfo{author}{Gupta, C.}, \bibinfo{author}{Kuchibhotla, A.K.},
  \bibinfo{author}{Ramdas, A.K.}, \bibinfo{year}{2019}.
\newblock \bibinfo{title}{Nested conformal prediction and quantile out-of-bag
  ensemble methods}.
\newblock \bibinfo{journal}{preprint arXiv:1910.10562} .
\bibitem[{Hemerik and Goeman(2018)}]{hemerik2018exact}
\bibinfo{author}{Hemerik, J.}, \bibinfo{author}{Goeman, J.},
  \bibinfo{year}{2018}.
\newblock \bibinfo{title}{Exact testing with random permutations}.
\newblock \bibinfo{journal}{Test} \bibinfo{volume}{27},
  \bibinfo{pages}{811--825}.
\bibitem[{Hemerik and Goeman(2020)}]{hemerik2020another}
\bibinfo{author}{Hemerik, J.}, \bibinfo{author}{Goeman, J.J.},
  \bibinfo{year}{2020}.
\newblock \bibinfo{title}{Another look at the lady tasting tea and differences
  between permutation tests and randomisation tests}.
\newblock \bibinfo{journal}{Int Stat Rev} .
\bibitem[{Hoeffding(1952)}]{hoeffding1952large}
\bibinfo{author}{Hoeffding, W.}, \bibinfo{year}{1952}.
\newblock \bibinfo{title}{The large-sample power of tests based on permutations
  of observations}.
\newblock \bibinfo{journal}{Annals of Mathematical Statistics} ,
  \bibinfo{pages}{169--192}.
\bibitem[{Huber(2019)}]{huber2019halving}
\bibinfo{author}{Huber, M.}, \bibinfo{year}{2019}.
\newblock \bibinfo{title}{Halving the bounds for the {M}arkov, {C}hebyshev, and
  {C}hernoff inequalities using smoothing}.
\newblock \bibinfo{journal}{Am Math Mon} \bibinfo{volume}{126},
  \bibinfo{pages}{915--927}.
\bibitem[{Kuchibhotla(2020)}]{kuchibhotla2020exchangeability}
\bibinfo{author}{Kuchibhotla, A.K.}, \bibinfo{year}{2020}.
\newblock \bibinfo{title}{Exchangeability, conformal prediction, and rank
  tests}.
\newblock \bibinfo{journal}{preprint arXiv:2005.06095} .
\bibitem[{Landau(1953)}]{landau1953dominance}
\bibinfo{author}{Landau, H.}, \bibinfo{year}{1953}.
\newblock \bibinfo{title}{On dominance relations and the structure of animal
  societies: Iii the condition for a score structure}.
\newblock \bibinfo{journal}{Bull Math Biophys} \bibinfo{volume}{15},
  \bibinfo{pages}{143--148}.
\bibitem[{Lehmann and Romano(2006)}]{lehmann2006testing}
\bibinfo{author}{Lehmann, E.L.}, \bibinfo{author}{Romano, J.P.},
  \bibinfo{year}{2006}.
\newblock \bibinfo{title}{Testing statistical hypotheses}.
\newblock \bibinfo{publisher}{Springer Science \& Business Media}.
\bibitem[{Lei et~al.(2018)Lei, G’Sell, Rinaldo, Tibshirani and
  Wasserman}]{lei2018distribution}
\bibinfo{author}{Lei, J.}, \bibinfo{author}{G’Sell, M.},
  \bibinfo{author}{Rinaldo, A.}, \bibinfo{author}{Tibshirani, R.J.},
  \bibinfo{author}{Wasserman, L.}, \bibinfo{year}{2018}.
\newblock \bibinfo{title}{Distribution-free predictive inference for
  regression}.
\newblock \bibinfo{journal}{J Am Stat Assoc} \bibinfo{volume}{113},
  \bibinfo{pages}{1094--1111}.
\bibitem[{Meinshausen and B{\"u}hlmann(2010)}]{meinshausen2010stability}
\bibinfo{author}{Meinshausen, N.}, \bibinfo{author}{B{\"u}hlmann, P.},
  \bibinfo{year}{2010}.
\newblock \bibinfo{title}{Stability selection}.
\newblock \bibinfo{journal}{J R Stat Soc Series B Stat Methodol}
  \bibinfo{volume}{72}, \bibinfo{pages}{417--473}.
\bibitem[{Meinshausen et~al.(2009)Meinshausen, Meier and
  B{\"u}hlmann}]{meinshausen2009p}
\bibinfo{author}{Meinshausen, N.}, \bibinfo{author}{Meier, L.},
  \bibinfo{author}{B{\"u}hlmann, P.}, \bibinfo{year}{2009}.
\newblock \bibinfo{title}{P-values for high-dimensional regression}.
\newblock \bibinfo{journal}{J Am Stat Assoc} \bibinfo{volume}{104},
  \bibinfo{pages}{1671--1681}.
\bibitem[{Papadopoulos et~al.(2002)Papadopoulos, Proedrou, Vovk and
  Gammerman}]{papadopoulos2002inductive}
\bibinfo{author}{Papadopoulos, H.}, \bibinfo{author}{Proedrou, K.},
  \bibinfo{author}{Vovk, V.}, \bibinfo{author}{Gammerman, A.},
  \bibinfo{year}{2002}.
\newblock \bibinfo{title}{Inductive confidence machines for regression}, in:
  \bibinfo{booktitle}{European Conference on Machine Learning},
  \bibinfo{organization}{Springer}. pp. \bibinfo{pages}{345--356}.
\bibitem[{Ren et~al.(2020)Ren, Wei and Cand{\`e}s}]{ren2020derandomizing}
\bibinfo{author}{Ren, Z.}, \bibinfo{author}{Wei, Y.},
  \bibinfo{author}{Cand{\`e}s, E.}, \bibinfo{year}{2020}.
\newblock \bibinfo{title}{Derandomizing knockoffs}.
\newblock \bibinfo{journal}{preprint arXiv:2012.02717} .
\bibitem[{Romano et~al.(2019)Romano, Patterson and
  Cand{\`e}s}]{romano2019conformalized}
\bibinfo{author}{Romano, Y.}, \bibinfo{author}{Patterson, E.},
  \bibinfo{author}{Cand{\`e}s, E.J.}, \bibinfo{year}{2019}.
\newblock \bibinfo{title}{Conformalized quantile regression}.
\newblock \bibinfo{journal}{preprint arXiv:1905.03222} .
\bibitem[{R{\"u}schendorf(1982)}]{ruschendorf1982random}
\bibinfo{author}{R{\"u}schendorf, L.}, \bibinfo{year}{1982}.
\newblock \bibinfo{title}{Random variables with maximum sums}.
\newblock \bibinfo{journal}{Adv Appl Probab} , \bibinfo{pages}{623--632}.
\bibitem[{Sesia and Cand{\`e}s(2020)}]{sesia2020comparison}
\bibinfo{author}{Sesia, M.}, \bibinfo{author}{Cand{\`e}s, E.J.},
  \bibinfo{year}{2020}.
\newblock \bibinfo{title}{A comparison of some conformal quantile regression
  methods}.
\newblock \bibinfo{journal}{Stat} \bibinfo{volume}{9}, \bibinfo{pages}{e261}.
\bibitem[{Shafer and Vovk(2008)}]{shafer2008tutorial}
\bibinfo{author}{Shafer, G.}, \bibinfo{author}{Vovk, V.}, \bibinfo{year}{2008}.
\newblock \bibinfo{title}{A tutorial on conformal prediction}.
\newblock \bibinfo{journal}{Mach Learn Res} \bibinfo{volume}{9},
  \bibinfo{pages}{371--421}.
\bibitem[{Shah and Samworth(2013)}]{shah2013variable}
\bibinfo{author}{Shah, R.D.}, \bibinfo{author}{Samworth, R.J.},
  \bibinfo{year}{2013}.
\newblock \bibinfo{title}{Variable selection with error control: another look
  at stability selection}.
\newblock \bibinfo{journal}{J R Stat Soc Series B Stat Methodol}
  \bibinfo{volume}{75}, \bibinfo{pages}{55--80}.
\bibitem[{Vovk(2015)}]{vovk2015cross}
\bibinfo{author}{Vovk, V.}, \bibinfo{year}{2015}.
\newblock \bibinfo{title}{Cross-conformal predictors}.
\newblock \bibinfo{journal}{Ann Math Artif Intell} \bibinfo{volume}{74},
  \bibinfo{pages}{9--28}.
\bibitem[{Vovk et~al.(2005)Vovk, Gammerman and Shafer}]{vovk2005algorithmic}
\bibinfo{author}{Vovk, V.}, \bibinfo{author}{Gammerman, A.},
  \bibinfo{author}{Shafer, G.}, \bibinfo{year}{2005}.
\newblock \bibinfo{title}{Algorithmic learning in a random world}.
\newblock \bibinfo{publisher}{Springer Science \& Business Media}.
\bibitem[{Vovk and Wang(2020)}]{vovk2020combining}
\bibinfo{author}{Vovk, V.}, \bibinfo{author}{Wang, R.}, \bibinfo{year}{2020}.
\newblock \bibinfo{title}{Combining p-values via averaging}.
\newblock \bibinfo{journal}{Biometrika} \bibinfo{volume}{107},
  \bibinfo{pages}{791--808}.
\bibitem[{van~de Wiel et~al.(2009)van~de Wiel, Berkhof and van
  Wieringen}]{van2009testing}
\bibinfo{author}{van~de Wiel, M.A.}, \bibinfo{author}{Berkhof, J.},
  \bibinfo{author}{van Wieringen, W.N.}, \bibinfo{year}{2009}.
\newblock \bibinfo{title}{Testing the prediction error difference between 2
  predictors}.
\newblock \bibinfo{journal}{Biostatistics} \bibinfo{volume}{10},
  \bibinfo{pages}{550--560}.

\end{thebibliography}

\appendix

\section{Proofs}

\begin{proof}[Proof of Theorem \ref{th:conformal}]
$\mathrm{pr}(Y_{n+1} \notin C_{\alpha}(X_{n+1})) =   \mathrm{E}(\phi_\alpha) \leq \alpha$. 
\end{proof}


\begin{proof}[Proof of Lemma \ref{le:phi}]
We provide an explicit formulation of split conformal prediction within the permutation framework. Consider the group of transformations $\Sigma = \{\sigma_1,\ldots,\sigma_{m+1}\}$ whose $m+1$ elements are restricted permutations consisting of swapping the index $n+1$ with another index of $I\cup\{n+1\} = (j_1,\ldots,j_m, n+1)$, i.e. for $i\in [m]$,
$\sigma_i = (\sigma_{i}(j_1),\ldots,\sigma_{i}(j_m), \sigma_{i}(n+1))$ is such that $\sigma_i{(n+1)}=j_i$, $\sigma_{i}(j_i)=n+1$ and $\sigma_i(j_k) = j_k$ for $j_k\neq j_i$. Here $\sigma_{m+1}$ denotes the identity permutation. 
Note that $\Sigma$ is a group with respect to the operation of composition of transformations: $\Sigma$ contains an identity element; every element of $\Sigma$ has an inverse in $\Sigma$; for all $\sigma$, $\tilde{\sigma} \in \Sigma$, $\sigma \circ \tilde{\sigma} \in \Sigma$.

For any $\sigma \in \Sigma$, let $\sigma Z = (Z^*_1,\ldots,Z^*_{n+1})$ be the transformed vector with $Z_i^* = Z_{\sigma(i)}$ if $i \in I \cup \{n+1\}$ and $Z_i^* = Z_i$ otherwise, and let
\begin{eqnarray*}\label{Rperm}
R(\sigma Z)  = R(Z_L, Z_{\sigma(n+1)})
\end{eqnarray*}
be the statistic $R$ calculated on $\sigma Z$. 

Since $Z_1,\ldots,Z_{n+1}$ are independent and identically distributed by assumption,
$Z  \stackrel{d}{=} \sigma Z$ holds for every $\sigma \in \Sigma$. This implies 
the group invariance condition \citep{hoeffding1952large, lehmann2006testing,
hemerik2018exact, hemerik2020another}:
\begin{eqnarray}\label{GIC}
(R(\sigma_1 Z),\ldots, R(\sigma_{m+1} Z)) \stackrel{d}{=} (R(\sigma_1 \circ \sigma Z),\ldots, R(\sigma_{m+1} \circ \sigma Z))
\end{eqnarray}
for every $\sigma \in \Sigma$, where $\stackrel{d}{=}$ denotes equality in distribution. Note that (\ref{GIC}) is implied by exchangeability of  $(R_1,\ldots,R_{m},R)$  \citep{commenges2003transformations, kuchibhotla2020exchangeability}, where $R_1 = R(\sigma_1 Z)$, $\ldots $, $R_m = R(\sigma_{m} Z)$, $R = R(\sigma_{m+1} Z)$.

For $\alpha \in (0,1)$, let
$\tilde{R}_{(1)} \leq \ldots \leq \tilde{R}_{(m+1)}$ be the sorted values of $R_1, \ldots ,R_m, R$, with ties broken arbitrarily, and let 
\begin{eqnarray*}\label{Ralpha}
\tilde{R}_{\alpha} =  \tilde{R}_{(k)}
\end{eqnarray*}
with $k = \lceil (1-\alpha)(m+1)\rceil$, be the the $k$th ordered statistic. We have $\tilde{R}_{\alpha} > R_\alpha$ if $R \leq \tilde{R}_{\alpha}$ and $\tilde{R}_{\alpha} = R_\alpha$ if $R > \tilde{R}_{\alpha}$. Then $\phi_\alpha =  \mathds{1}\{R > \tilde{R}_{\alpha}\}= \mathds{1}\{R > R_{\alpha}\}$.

 From Theorem 1 in \cite{hemerik2018exact}, we obtain that 
the Bernoulli variable
$\phi_\alpha = \mathds{1}\{R > \tilde{R}_{\alpha}\}$ satisfies $\mathrm{E}(\phi_\alpha)\leq \alpha$. 

Finally, Condition 1 and Proposition 1 in \cite{hemerik2018exact} ensure that if $R_1,\ldots,R_m,R$ are almost surely distinct,
then $\mathrm{E}(\phi_\alpha) = \alpha$  if and only if $\alpha \in  \{1/(m+1),2/(m+1),\ldots,m/(m+1)\}$.

\end{proof}

\begin{proof}[Proof of Theorem \ref{th:markov}]

The following proof follows the lines of Lemma 2 of \cite{ren2020derandomizing}.  See also \cite{huber2019halving}.

For each   $\lambda \in \mathbb{N}_0$, we introduce an auxiliary random variable $U \sim \mathrm{Unif}(-\lambda,k)$  independent of $V_\beta$. We will prove the following:
$$\mathrm{pr}(V_\beta \geq k ) \leq \mathrm{pr}(V_\beta + U \geq k )\leq \frac{\mathrm{E}(V_\beta)}{k+ \lambda}.$$
For the first inequality, we have
\begin{eqnarray*}
\mathrm{pr}(V_\beta + U \geq k) 
&=& \frac{1}{\lambda+k}\int_{-\lambda}^{k}\mathrm{pr}(V_\beta\geq k-u)\, \mathrm{d} u \\
&=& \frac{1}{\lambda+k}\left[\int_{-\lambda}^{0}\mathrm{pr}(V_\beta\geq k-u)\, \mathrm{d} u + \int_{0}^{k}\mathrm{pr}(V_\beta\geq k-u)\, \mathrm{d} u\right]\\
&=& \frac{1}{\lambda+k}\left\{\int_{-\lambda}^{k}\mathrm{pr}(V_\beta\geq k)\, \mathrm{d} u +\left[ \int_{0}^{k}\mathrm{pr}\left(V_\beta\in [k-u,k)\right)\, \mathrm{d} u - \int_{-\lambda}^0 \mathrm{pr}\left(V_\beta \in (k,k-u]\right)\mathrm{d}u\right]\right\}\\
&\geq& \mathrm{pr}(V_\beta \geq k).
\end{eqnarray*}
The inequality  follows from  the fact that the term in the squared brackets is non-negative. Namely, since $V_\beta$ is a discrete random variable, we have 
$$
\int_{0}^{k}\mathrm{pr}\left(V_\beta \in [k-u, k)\right)\mathrm{d}u = \sum_{u=1}^{k-1}\mathrm{pr}(V_\beta \in [k-u, k)),
$$
and similarly 
$$
\int_{-\lambda}^0 \mathrm{pr}\left(V_\beta \in (k, k-u]\right) \mathrm{d}u = \sum_{u=1}^\lambda \mathrm{pr}\left(V_\beta \in (k, k+u])\right),
$$
which together with assumption
\eqref{as:lambda} completes the proof of the first inequality. 

\noindent The second inequality follows from 
\begin{eqnarray*}
\mathrm{pr}(V_\beta + U \geq k) &=& 
 \mathrm{E}(\mathds{1}\{k-V_\beta \leq - \lambda\}) + \mathrm{E}\Big(\frac{V_\beta}{ k+\lambda } \mathds{1}\{ -\lambda < k-V_\beta \leq k\}\Big) \\
&\leq & \mathrm{E}\Big(\frac{V_\beta}{ k+\lambda } \Big).
\end{eqnarray*}

\end{proof}

\section{Cross conformal prediction}

The multi split approach includes  $B$-fold cross-conformal prediction as a special case, $2\leq B \leq n$. Suppose for simplicity that $n=Bm$. Consider $(j_1,\ldots,j_n)$, a permutation of $[n]$, and let $L^{[1]}$ be the complement of $I^{[1]}=(j_1,\ldots,j_{m})$, $L^{[2]}$ be the complement of $I^{[2]}=(j_{m+1},\ldots,j_{2m})$, $\ldots$, $L^{[B]}$ be the complement of $I^{[B]}=(j_{(B-1)m + 1},\ldots,j_n)$. Then $B$-fold cross-conformal prediction is obtained by using  Algorithm \ref{alg:MSC} with $L^{[b]},I^{[b]}$ for $b=1,\ldots,B$.

A special case corresponds to $B=n$, resulting in leave-one-out conformal prediction. Let
\begin{eqnarray*}\label{Ri}
R^{[i]} = R((Z_l)_{l\neq i}, Z_{n+1}), \quad R^{[i]}_{1/2} =R((Z_l)_{l \neq i}, Z_{i}), \quad \phi^{[i]}_{1/2} = \mathds{1}\{ R^{[i]} > R^{[i]}_{1/2} \}, \quad i=1,\ldots,n,
\end{eqnarray*}
and $V_{1/2} = \sum_{i=1}^{n} \phi^{[i]}_{1/2}$. 
For $k=\lfloor (1-\alpha)(n+1) \rfloor$, it gives
\begin{eqnarray*}\label{LOO}
C^{\mathrm{L00}}_{2\alpha} = \Big\{ y \in \mathbb{R}: \Pi_{1/2}^y > \frac{\alpha(n+1) - 1}{n} \Big\}
\end{eqnarray*}
where $\Pi_{1/2}^y = \frac{1}{n} \sum_{i=1}^{n} \mathds{1}\{y\in C^{[i]}_{1/2}(x_{n+1})\}$. 

For $R$ defined as in (\ref{r1}) and symmetric in its first argument, \cite{barber2021predictive} proposed an interval that always contains $C^{\mathrm{L00}}_{2\alpha}$, called the jackknife+ prediction interval, and \cite{balasubramanian2014conformal} showed that the coverage  guarantee for $C^{\mathrm{L00}}_{2\alpha}$ is at least $1-2\alpha$. 


\section{Communities  and  Crime  data  set}

We randomly sample $n = 200$ data points from the full data set, to use as training data. The remaining 1794 points form the test set. 
The procedure is repeated
1000 times, with the train/test split formed randomly each time. For each method, we calculate its empirical coverage and interval width on the test set. Summaries of the results are reported in the following Table.

\begin{table}[ht]
\centering
\begin{tabular}{r|rrrrrr}
  \hline
    \multicolumn{7}{c}{Coverage}\\
Method & Min. & 1st Qu. & Median & Mean & 3rd Qu. & Max. \\ 
  \hline
Single & 0.74 & 0.88 & 0.90 & 0.90 & 0.92 & 0.98 \\ 
  Leftskewed & 0.82 & 0.89 & 0.91 & 0.91 & 0.92 & 0.96 \\ 
  Jackknife & 0.88 & 0.94 & 0.95 & 0.95 & 0.96 & 0.98 \\ 
  $\tau = \alpha$ & 0.86 & 0.93 & 0.94 & 0.94 & 0.95 & 0.98 \\ 
 $\tau = 1/2$ & 0.88 & 0.94 & 0.96 & 0.95 & 0.96 & 0.99 \\ 
  $\tau = 1-\alpha$ & 0.94 & 0.97 & 0.98 & 0.98 & 0.99 & 1.00 \\ 
   \hline
      \multicolumn{7}{c}{Interval width}\\
Method & Min. & 1st Qu. & Median & Mean & 3rd Qu. & Max. \\ 
  \hline
Single & 0.28 & 0.44 & 0.49 & 0.51 & 0.55 & 1.12 \\ 
  Leftskewed & 0.00 & 0.46 & 0.49 & 0.50 & 0.53 & 0.68 \\ 
  Jackknife & 0.51 & 0.69 & 0.73 & 0.74 & 0.78 & 1.06 \\ 
  $\tau = \alpha$ & 0.00 & 0.61 & 0.65 & 0.66 & 0.70 & 1.30 \\ 
  $\tau = 1/2$ & 0.00 & 0.66 & 0.72 & 0.72 & 0.78 & 1.02 \\ 
  $\tau = 1-\alpha$ & 0.00 & 0.91 & 0.96 & 0.96 & 1.02 & 1.35 \\ 
   \hline
\end{tabular}
\caption{Summary of coverage and interval width for each method based on 1000 repetitions. }
\end{table}

\end{document}